\documentclass[journal]{IEEEtran}
\usepackage{cite}
\usepackage{amssymb}
\usepackage{xcolor}
\usepackage{subcaption}
\usepackage{amsthm}  
\newtheorem{theorem}{Theorem}
\newtheorem{lemma}{Lemma} 

\ifCLASSINFOpdf
   \usepackage[pdftex]{graphicx}
\else
   \usepackage[dvips]{graphicx}
\fi
\usepackage{amsmath}
\usepackage{amsfonts} 
\usepackage{balance}

\usepackage{array}
\begin{document}
%
\title{On the Sum of Inverse Gamma RVs and its Application to Wireless Communications Networks}
%
%
%

\author{Petros~S.~Bithas,~\IEEEmembership{Senior Member,~IEEE,}
    Liang~Yang,~\IEEEmembership{Senior Member,~IEEE,} Qiuming~Zhu,~\IEEEmembership{Senior Member,~IEEE,} Hector E. Nistazakis,
and~Daniel~Benevides~Da~Costa,~\IEEEmembership{Senior Member,~IEEE}
\thanks{Copyright (c) 20xx IEEE. Personal use of this material is permitted. However, permission to use this material for any other purposes must be obtained from the IEEE by sending a request to pubs-permissions@ieee.org. }
\thanks{P. S. Bithas is with the Department of Digital Industry Technologies,
National and Kapodistrian University of Athens (NKUA), 15772 Athens, Greece (email: pbithas@uoa.gr).}
\thanks{L. Yang is with the College of Computer Science and Electronic Engineering, Hunan University, Changsha 410082, China (e-mail: liangy@hnu.edu.cn).}
\thanks{Q. Zhu, is with the College of Electronic and Information Engineering, Nanjing University of Aeronautics and Astronautics, Nanjing 211106, China (e-mail: zhuqiuming@nuaa.edu.cn).}
\thanks{H.~E.~Nistazakis is with the Department of Physics, NKUA, Athens, 15784, Greece. (e-mail: { enistaz@phys.uoa.gr}).}
\thanks{D. B. da Costa is with the Interdisciplinary Research Center for
Communication Systems and Sensing (IRC-CSS), Department of Electrical
Engineering, King Fahd University of Petroleum \& Minerals (KFUPM),
Dhahran 31261, Saudi Arabia (e-mail: danielbcosta@ieee.org).}
}
%
%

\markboth{Journal of \LaTeX\ Class Files,~Vol.~14, No.~8, August~2025}%
{Shell \MakeLowercase{\textit{et al.}}: Bare Demo of IEEEtran.cls for IEEE Journals}
%



\maketitle

\begin{abstract}
Investigating the statistics of the sum of random variables (RVs) is fundamental in wireless communication systems, as it can be used to accurately estimate their performance. In this paper, exact closed-form expressions are derived for the probability density function (PDF) and cumulative distribution function (CDF) of the sum of inverse-gamma (IG) RVs. The IG distribution is especially suitable for performance analysis purposes because it provides a tractable yet accurate model of shadowing/local mean power variations. Unlike prior studies on sums of shadowing RVs, the resulting formulas are simple to evaluate and numerically stable. Their utility is demonstrated in two practical settings: unmanned aerial vehicle (UAV)-assisted networks and high-speed railway communications. {Across all considered scenarios, the analytical results are in excellent agreement with the Monte Carlo simulation results,} while the asymptotic approximations closely track the exact results.
\end{abstract}

\begin{IEEEkeywords}
Inverse gamma distribution, shadowing, sums of random variables, UAV, high-speed railway.
\end{IEEEkeywords}

\enlargethispage{\baselineskip}

%
\IEEEpeerreviewmaketitle

\section{Introduction}
%
%
%
%
\IEEEPARstart{I}{n channel} modeling scenarios, wireless channels exhibit small-scale fading, i.e., rapid amplitude/phase fluctuations from multipath and Doppler. Thus, various continuous distributions have been used for modeling this effect, including Rayleigh, Rician, or Nakagami-$m$. In the same vein, the slow variations induced by large-scale fading are modeled with alternative distribution families, most notably the lognormal \cite{10591508}. {An important distribution that has been recently introduced for shadowing modeling is the  {inverse-gamma (IG)} \cite{8166770}.}  Extensive on-site measurements for wearables, cellular device-to-device, UAV, and vehicle-to-vehicle communication scenarios confirm that IG distribution is an appropriate and accurate shadowing model in emerging wireless applications \cite{bithas2020uav,9380972}. Regardless, based on these models, the sums of random variables (RVs) are widely employed in wireless communications, with their  {probability density function (PDF)/cumulative distribution function (CDF)} statistics enabling analytical studies of detector performance, interference, and diversity combining.

Analytical investigations of the exact statistics of the sum of fading envelopes have been pursued for decades. {Over the past few years, numerous studies have addressed this topic, e.g., \cite{ansari2017new,10003243,chondrogiannis2024harq,garcia2023new}.} 
In \cite{ansari2017new}, the PDF of the sum of square Nakagami-$m$ RVs has been provided in terms of the Fox's $H$ function. Moreover, in \cite{10003243}, exact expressions have been provided in terms of infinite series for the PDF of the sum of $\alpha-\mu$ fading RVs. In  \cite{chondrogiannis2024harq}, the PDF of the sum of generalized-$K$ RVs has been also derived in terms of infinite sums.  {Moreover, in \cite{garcia2023new}, the PDF and the CDF of the sum of independent and identically distributed  (i.i.d.) squared fluctuating two-ray RVs have been analytically derived in terms of infinite series.}  {The analytical investigation of sum statistics becomes more challenging when shadowing is also taken into account. For example, in \cite{badarneh2018sum}, the PDF/CDF of the sum of Fisher--Snedecor \emph{F} variates are expressed in terms of the Lauricella multivariate hypergeometric function, which, although exact, is numerically demanding due to its multivariate nature and increased computational complexity. On the other hand, when focusing exclusively on the shadowing component,} only approximated results are mainly reported, e.g., the PDF of the sum of lognormal RVs has been approximated by another lognormal RV in \cite{hadj2020sinr}. However, despite their simplicity, such approximations do not provide consistent accuracy across all regions of the distribution and parameter settings.

\enlargethispage{\baselineskip}
Although numerically stable PDFs for sums of RVs are essential in wireless analysis, many existing results rely on infinite series or special functions (e.g., Meijer-G/Fox-H), which often increase computational complexity. Consequently, exact and easily computable sum-statistics remain highly desirable. Moreover, despite the growing interest in the IG shadowing model, the distribution of sums of IG RVs remains unexplored. Motivated by the above, the contributions of this paper are:
 {
\begin{itemize}
    \item A novel framework is proposed for sums of independent, not necessarily identically distributed (i.n.d.) IG RVs.
\item Exact, easy-to-evaluate closed-form expressions are derived for the corresponding PDF and CDF.
    \item The impact of IG shaping-parameter quantization on the CDF behavior is investigated.
    \item The computational efficiency of the proposed finite-sum closed-form expressions is assessed.
    \item The usefulness of the framework is demonstrated through two application scenarios.
    \item Firstly, the signal-to-interference ratio (SIR) statistics of UAV-assisted wireless communications are analyzed. 
    \item Secondly, the effect of carrier aggregation (CA) in a high-speed railway (HSR) communication system is studied.
\end{itemize}}

\section{Statistics of the Sum of IG RVs}
Let $I_k$ denote an IG RV with PDF given by \cite{7343336}
\begin{equation}\label{eq:PDF_IG}
    f_{I_k}(x)=\frac{W_k^{\alpha_k }}{\Gamma(\alpha_k) x^{\alpha_k+1}} \exp \left(-\frac{W_k}{x}\right),x> 0,
\end{equation}
where $\alpha_k$ denotes the shaping parameter,  {which is related to the severity of shadowing, with higher values of $\alpha_k$ corresponding to more severe shadowing conditions},  {$W_k$ denotes the scaling parameter of the IG distribution}, and  $\Gamma(\cdot)$ denotes the Gamma function \cite[eq. (8.310)]{B:Ryzhik_book}. The corresponding CDF expression is given by
\begin{equation}\label{eq:IG_marginal_CDF}
    F_{I_k}(x)=\frac{\Gamma \left( \alpha_k,\frac{W_k}{ x}\right)}{\Gamma(\alpha_k) },
\end{equation}
where $\Gamma(\cdot,\cdot)$ denotes the upper incomplete Gamma function \cite[eq. (8.350/2)]{B:Ryzhik_book}. Moreover, the Laplace transform of $I_k$, equivalently the conventional MGF evaluated at $-s$, is defined as $M_{I_k}(s)\triangleq\mathbb{E}\{\exp(-sI_k)\}$, $s\geq 0$. Substituting \eqref{eq:PDF_IG} into this definition, and using \cite[eq. (3.471/9)]{B:Ryzhik_book}, yields
\begin{equation}\label{eq:MGF_Ik}
    M_{I_k} (s)=\frac{2}{\Gamma(\alpha_k)} \left( s W_k\right)^{\frac{\alpha_k}{2}} {\rm K}_{\alpha_k} \left( 2\sqrt{s W_k}\right),
\end{equation}
where ${\rm K}_v(\cdot)$ denotes the modified Bessel function of the second kind and order $v$ \cite[eq. (8.432/1)]{B:Ryzhik_book}. Hereafter, $M_X(s)$ denotes the Laplace transform $\mathbb{E}\{\exp(-sX)\}$. Next, closed-form expressions for the PDF and CDF of the sum of IG distributed RVs are introduced.

\begin{theorem}
Let $\bf{I}$ be a continuous RV defined as
\begin{equation}
    {\bf{I}}=\sum_{k=1}^K I_k,
\end{equation}
where $I_k$ are i.n.d. RVs following the PDF given in \eqref{eq:PDF_IG} and {their shaping parameters are assumed integer
plus one half, i.e., $\alpha_k=n_k+1/2$}  {(with $n_k \in \mathbb{N}$)}\footnote{ {Since the analysis considers i.n.d. RVs, distinct constant values of $\alpha_k$, and thus of $n_k$, may be employed.}}. The PDF of $\bf{I}$ is given by
\begin{equation}\label{eq:PDF_sum}
\begin{split}
    f_{\bf{I}}(x)&=\left( \prod_{k=1}^K \frac{\sqrt{\pi}}{\Gamma(\alpha_k)}\right) \sum_{k_1=0}^{n_1}\cdots\sum_{k_K=0}^{n_K}  \mathcal{A}\\ & \times  \frac{\sqrt{2}/\sqrt{\pi}}{(2x)^{\sum_{k=1}^K \frac{n_k-k_k}{2}+1}}   \exp\left( -\frac{\left( \sum_{k=1}^K {\sqrt{W_k}}\right)^2}{2x}\right) \\ & \times {\rm D}_{1+\sum_{k=1}^K n_k-k_k} \left( \frac{\sum_{k=1}^K \sqrt{2W_k}}{\sqrt{x}}\right),
\end{split}
\end{equation}
where ${\rm D}_v(\cdot)$ denotes the Parabolic cylinder function \cite[eq. (9.240)]{B:Ryzhik_book}  {and $\mathcal{A}=\left[\prod_{k=1}^K\frac{(n_k+k_k)! W_k^{\frac{n_k-k_k}2}}{k_k! (n_k-k_k)!2^{2k_k}} \right]$. It is noted that using \cite[eq. (9.253)]{B:Ryzhik_book}, the parabolic cylinder function can be reduced to the Hermite-polynomial form as follows $D_n(x)=2^{-n/2}\exp\!\left(-\frac{x^2}{4}\right) H_n\!\left(\frac{x}{\sqrt{2}}\right)$.}
\end{theorem}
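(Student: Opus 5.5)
The proof works through the Laplace transform. Since the $I_k$ are independent, $M_{\bf I}(s)=\prod_{k=1}^K M_{I_k}(s)$ with each factor given by \eqref{eq:MGF_Ik}. The half-integer assumption $\alpha_k=n_k+1/2$ makes ${\rm K}_{n_k+1/2}$ elementary. By \cite[eq. (8.468)]{B:Ryzhik_book},
\begin{equation*}
{\rm K}_{n+\frac12}(z)=\sqrt{\frac{\pi}{2z}}\,e^{-z}\sum_{j=0}^{n}\frac{(n+j)!}{j!\,(n-j)!\,(2z)^{j}} .
\end{equation*}
Substituting $z=2\sqrt{sW_k}$ turns each factor into
\begin{equation*}
M_{I_k}(s)=\frac{\sqrt{\pi}}{\Gamma(\alpha_k)}\sum_{k_k=0}^{n_k}\frac{(n_k+k_k)!\,W_k^{\frac{n_k-k_k}{2}}}{k_k!\,(n_k-k_k)!\,2^{2k_k}}\, s^{\frac{n_k-k_k}{2}}\, e^{-2\sqrt{sW_k}} .
\end{equation*}
I expect the powers of $2$ and of $W_k$ to collapse exactly into the coefficient $\mathcal{A}$ after this bookkeeping. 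Multiplying over $k$ and expanding the product of finite sums then gives
\begin{equation*}
M_{\bf I}(s)=\Big(\prod_{k}\tfrac{\sqrt\pi}{\Gamma(\alpha_k)}\Big)\sum_{k_1,\dots,k_K}\mathcal{A}\; s^{\nu/2}\,e^{-2b\sqrt{s}},
\end{equation*}
where $\nu=\sum_k (n_k-k_k)$ and $b=\sum_k\sqrt{W_k}$.

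Everything now reduces to a single inverse Laplace transform, $\mathcal{L}^{-1}\{s^{\nu/2}e^{-2b\sqrt s}\}(x)$, which is the same for every term of the finite sum. I would use the standard pair (e.g. Erdélyi's tables, or \cite[eq. (3.462/1)]{B:Ryzhik_book} read backwards):
\begin{equation*}
\mathcal{L}^{-1}\{s^{\mu-1}e^{-a\sqrt{s}}\}(x)=\frac{1}{\sqrt{\pi}\,(2x)^{\mu}}\,e^{-a^2/(8x)}\,{\rm D}_{2\mu-1}\!\Big(\frac{a}{\sqrt{2x}}\Big) .
\end{equation*}
Taking $a=2b$ and $2\mu-1=\nu+1$, i.e. $\mu=\nu/2+1$, gives the exponent $(2x)^{\nu/2+1}$, the Gaussian factor $\exp(-b^2/(2x))$, and the argument $2b/\sqrt{2x}=\sum_k\sqrt{2W_k}/\sqrt{x}$. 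These are exactly the pieces appearing in \eqref{eq:PDF_sum}.

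To verify the pair independently, I would compute the forward transform of the right-hand side. The route is the integral representation of ${\rm D}_{-p}$ together with \cite[eq. (3.462/1)]{B:Ryzhik_book}, or equivalently differentiating the elementary pair $\mathcal{L}\{\frac{a}{2\sqrt\pi}x^{-3/2}e^{-a^2/(4x)}\}=e^{-a\sqrt s}$ with respect to $a$. Using the Hermite form $D_n(y)=2^{-n/2}e^{-y^2/4}H_n(y/\sqrt2)$ also makes clear that the result is a finite elementary expression.

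The main obstacle is that for $\nu\ge 0$ the term $s^{\nu/2}e^{-2b\sqrt s}$ grows polynomially near $s=0$ only mildly, but the positive powers of $s$ mean the pair holds in the distributional/analytic-continuation sense. The cleanest justification is to note that $s^{\nu/2}e^{-2b\sqrt s}=(-\partial_{2b})^{\nu}e^{-2b\sqrt s}$. The inverse of $e^{-a\sqrt s}$ is a genuine density, so $\nu$-fold differentiation in $a$ under the integral is legitimate because the resulting functions are integrable on $(0,\infty)$. Differentiating the Gaussian kernel repeatedly produces Hermite polynomials, which by \cite[eq. (9.253)]{B:Ryzhik_book} are the stated ${\rm D}_{\nu+1}$ factors. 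The remaining work is checking that the constants, in particular the $\sqrt2/\sqrt\pi$ prefactor, match.
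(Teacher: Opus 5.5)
Your route is the paper's: the Laplace transform, \cite[eq. (8.468)]{B:Ryzhik_book} to make ${\rm K}_{n_k+1/2}$ elementary, expansion of the product of finite sums, and termwise inversion of $s^{B}e^{-C\sqrt{s}}$ (the paper quotes \cite[eq. (2.2.1/10)]{B:prudnikov_book5} for this last step). Your per-factor expansion with $s^{(n_k-k_k)/2}W_k^{(n_k-k_k)/2}$ is correct. The paper's \eqref{eq:MGF_Ik_2nd} drops the $1/2$ in the exponent, but its \eqref{eq:MGF} agrees with you.

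\textbf{The gap is the inverse Laplace pair you quote.} It is off by a factor $\sqrt{2}$, so the constant check you defer to the end would fail.

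\emph{Test at $\mu=1$.} Since ${\rm D}_1(z)=z\,e^{-z^2/4}$, your right-hand side becomes $\frac{a}{2\sqrt{2\pi}}x^{-3/2}e^{-a^2/(4x)}$. The true inverse is $\mathcal{L}^{-1}\{e^{-a\sqrt{s}}\}=\frac{a}{2\sqrt{\pi}}x^{-3/2}e^{-a^2/(4x)}$.

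\emph{Test at $\mu=1/2$.} Your formula gives $(2\pi x)^{-1/2}e^{-a^2/(4x)}$, whereas the correct inverse is $(\pi x)^{-1/2}e^{-a^2/(4x)}$.

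The correct pair is
\[
\mathcal{L}^{-1}\{s^{\mu-1}e^{-a\sqrt{s}}\}(x)=\frac{\sqrt{2}/\sqrt{\pi}}{(2x)^{\mu}}\,e^{-a^2/(8x)}\,{\rm D}_{2\mu-1}\left(\frac{a}{\sqrt{2x}}\right).
\]
With $\mu=B+1$ this is exactly \eqref{eq:generic_pdf}, and it produces the $\sqrt{2}/\sqrt{\pi}$ prefactor in \eqref{eq:PDF_sum}. Used as you wrote it, your argument would output the right-hand side of \eqref{eq:PDF_sum} divided by $\sqrt{2}$, which integrates to $1/\sqrt{2}$ rather than $1$.

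\textbf{How to fix it.} Your own differentiation argument repairs this, provided you anchor it at the correctly normalized $\mu=1$ case (the L\'evy kernel $\frac{a}{2\sqrt{\pi}}x^{-3/2}e^{-a^2/(4x)}$). The recurrence ${\rm D}_\nu'(z)-\frac{z}{2}{\rm D}_\nu(z)+{\rm D}_{\nu+1}(z)=0$ gives $\frac{d}{dz}\bigl[e^{-z^2/4}{\rm D}_\nu(z)\bigr]=-e^{-z^2/4}{\rm D}_{\nu+1}(z)$. Hence $-\partial_a$ applied to $e^{-a^2/(8x)}{\rm D}_{2\mu-1}(a/\sqrt{2x})$ yields $(2x)^{-1/2}e^{-a^2/(8x)}{\rm D}_{2\mu}(a/\sqrt{2x})$. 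Each derivative therefore raises $\mu$ by $1/2$ and keeps the constant $\sqrt{2}/\sqrt{\pi}$. After $\nu$ steps you reach $\mu=\nu/2+1$ with the right normalization.

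\textbf{The distributional concern is unnecessary.} On $s=c+i\omega$ one has $|e^{-2b\sqrt{s}}|\approx e^{-b\sqrt{2|\omega|}}$, so the Bromwich integral of $s^{\nu/2}e^{-2b\sqrt{s}}$ converges absolutely. Its inverse is an ordinary smooth function, signed and with zero integral when $\nu>0$, which is exactly what differentiation under the integral gives you.
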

\begin{proof}
See the Appendix.
\end{proof}\setcounter{lemma}{0} 
\begin{lemma}
The CDF expression of $\bf{I}$ is given by
\begin{equation}\label{eq:CDF_sum}
\begin{split}
    F_{\bf{I}}(x)&=\left( \prod_{k=1}^K \frac{\sqrt{\pi}}{\Gamma(\alpha_k)}\right) \sum_{k_1=0}^{n_1}\cdots\sum_{k_K=0}^{n_K} \mathcal{A}\\ & \times \frac{\sqrt{2}/\sqrt{\pi}}{(2x)^{\sum_{k=1}^K \frac{n_k-k_k}{2}}}   \exp\left( -\frac{\left( \sum_{k=1}^K \sqrt{W_k}\right)^2}{2x}\right) \\ & \times {\rm D}_{-1+\sum_{k=1}^K n_k-k_k} \left( \frac{\sum_{k=1}^K \sqrt{2W_k}}{\sqrt{x}}\right).
\end{split}
\end{equation}
\end{lemma}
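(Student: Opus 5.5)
The natural route is to work in the Laplace domain, exactly as for the PDF in the Theorem. The idea is to use $F_{\bf I}(x)=\mathcal{L}^{-1}\{M_{\bf I}(s)/s\}(x)$, which holds because ${\bf I}\ge 0$ and $F_{\bf I}(0)=0$. By independence, $M_{\bf I}(s)=\prod_{k=1}^K M_{I_k}(s)$, with each factor given by \eqref{eq:MGF_Ik}.

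The first step uses the half-integer order $\alpha_k=n_k+1/2$. In that case the Bessel function terminates, by \cite[eq. (8.468)]{B:Ryzhik_book}:
\begin{equation*}
{\rm K}_{n+\frac12}(z)=\sqrt{\frac{\pi}{2z}}\,e^{-z}\sum_{j=0}^{n}\frac{(n+j)!}{j!\,(n-j)!\,(2z)^{j}} .
\end{equation*}
Setting $z=2\sqrt{sW_k}$ and combining with the prefactor $(sW_k)^{\alpha_k/2}$ turns each $M_{I_k}(s)$ into
\begin{equation*}
\frac{\sqrt{\pi}}{\Gamma(\alpha_k)}\sum_{k_k=0}^{n_k}\frac{(n_k+k_k)!\,W_k^{\frac{n_k-k_k}{2}}}{k_k!\,(n_k-k_k)!\,2^{2k_k}}\, s^{\frac{n_k-k_k}{2}}\,e^{-2\sqrt{sW_k}} .
\end{equation*}
This bookkeeping of powers of $2$, $s$ and $W_k$ has to be checked so that it reproduces exactly the coefficient $\mathcal{A}$. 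Multiplying the $K$ finite sums then gives
\begin{equation*}
\frac{M_{\bf I}(s)}{s}=\Big(\prod_k \tfrac{\sqrt\pi}{\Gamma(\alpha_k)}\Big)\sum_{k_1}\cdots\sum_{k_K}\mathcal{A}\; s^{\frac{m}{2}-1}e^{-a\sqrt{s}},
\end{equation*}
where $m=\sum_k(n_k-k_k)$ and $a=2\sum_k\sqrt{W_k}$.

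Because the sums are finite, the inverse Laplace transform can be taken term by term. The key ingredient is the classical pair
\begin{equation*}
\mathcal{L}^{-1}\{s^{\mu}e^{-a\sqrt{s}}\}(x)=\frac{\sqrt{2}/\sqrt{\pi}}{(2x)^{\mu+1}}\,e^{-\frac{a^{2}}{8x}}\,{\rm D}_{2\mu+1}\!\Big(\frac{a}{\sqrt{2x}}\Big),
\end{equation*}
which is presumably the same formula used in the Appendix for the PDF. With $\mu=m/2$ it yields \eqref{eq:PDF_sum}. With $\mu=m/2-1$ it yields the power $(2x)^{-m/2}$ and the index $m-1$. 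Since $a^2/(8x)=(\sum_k\sqrt{W_k})^2/(2x)$ and $a/\sqrt{2x}=\sum_k\sqrt{2W_k}/\sqrt{x}$, this is exactly \eqref{eq:CDF_sum}.

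The main obstacle is justifying the transform pair for every exponent that actually occurs. For the CDF, $\mu=m/2-1$ can be $-1$, namely when $m=0$ (all $k_k=n_k$), giving ${\rm D}_{-1}$, an erfc-type function. It can also be $-1/2$, or a non-integer value. For $\mu<0$ the pair follows from writing $s^{\mu}=\Gamma(-\mu)^{-1}\int_0^\infty t^{-\mu-1}e^{-st}\,dt$, convolving with the Lévy kernel $\mathcal{L}^{-1}\{e^{-a\sqrt s}\}$, and invoking the integral representation of ${\rm D}_{\nu}$ for $\nu<0$ \cite[eq. (9.241/2)]{B:Ryzhik_book}. For $\mu\ge 0$, with $2\mu+1\ge 1$, I would extend by analytic continuation in $\mu$, or by differentiating in $x$ via the recurrence $\frac{d}{dz}[e^{z^2/4}{\rm D}_\nu(z)]=\nu e^{z^2/4}{\rm D}_{\nu-1}(z)$.

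As an independent check, I would verify directly that $\frac{d}{dx}$ of each CDF term equals the corresponding PDF term, using that same recurrence. I would also confirm that each term tends to $0$ as $x\to0^+$, so that $F_{\bf I}(x)=\int_0^x f_{\bf I}$ holds termwise.
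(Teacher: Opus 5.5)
Your proposal is correct and follows the paper's Appendix essentially step by step. The paper reduces ${\rm K}_{n_k+1/2}$ to a finite sum, multiplies the Laplace transforms into terms $s^{m/2}e^{-a\sqrt{s}}$, and inverts $s^{m/2-1}e^{-a\sqrt{s}}$ term by term, simply quoting the transform pair from Prudnikov et al.\ (eq.~2.2.1/10) rather than deriving it. Two small remarks: your exponent $(sW_k)^{(n_k-k_k)/2}$ is the right one (the paper's intermediate expansion writes $(sW_k)^{n_k-k_k}$, a typo its $\mathcal{A}$ silently corrects); and the ${\rm D}_\nu$ integral representation you invoke only covers $\mu<-\tfrac12$, so the base case $\mu=-\tfrac12$ should come from the elementary pair $\mathcal{L}^{-1}\{s^{-1/2}e^{-a\sqrt{s}}\}=e^{-a^2/(4x)}/\sqrt{\pi x}$ (i.e.\ ${\rm D}_0$), after which your differentiation step covers all $\mu\ge 0$.
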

\begin{proof}
See the Appendix.
\end{proof}

{To assess the impact of the adopted IG shaping-parameter ''quantization”, we evaluate the absolute CDF difference $|F_{\bf I_1}(\gamma_{\rm th})-F_{\bf I_2}(\gamma_{\rm th})|$, given in \eqref{eq:CDF_sum}, between two configurations with $\alpha_{k_2}=\alpha_{k_1}+1/2$ for the same number of RVs $K$. For $\gamma_{\rm th}=-10~\mathrm{dB}$ (Fig.~\ref{fig:fig1new}), the difference remains very small over most parameter values and becomes noticeable only within a limited region of moderate $K$ and relatively small $\alpha_k$, where a localized increase is observed. For $\gamma_{\rm th}=-5~\mathrm{dB}$ (Fig.~\ref{fig:fig2new}), the discrepancy is more pronounced and reaches its maximum for intermediate values of both $K$ and $\alpha_k$, while it rapidly diminishes outside this region. Overall, the results indicate that the proposed quantization step has a limited effect on the CDF behavior and mainly impacts specific operating regimes, remaining negligible across broad ranges of practical system parameters.}

\begin{figure}[t]
    \centering
    \begin{subfigure}[t]{0.48\columnwidth}
        \centering
        \includegraphics[width=\linewidth]{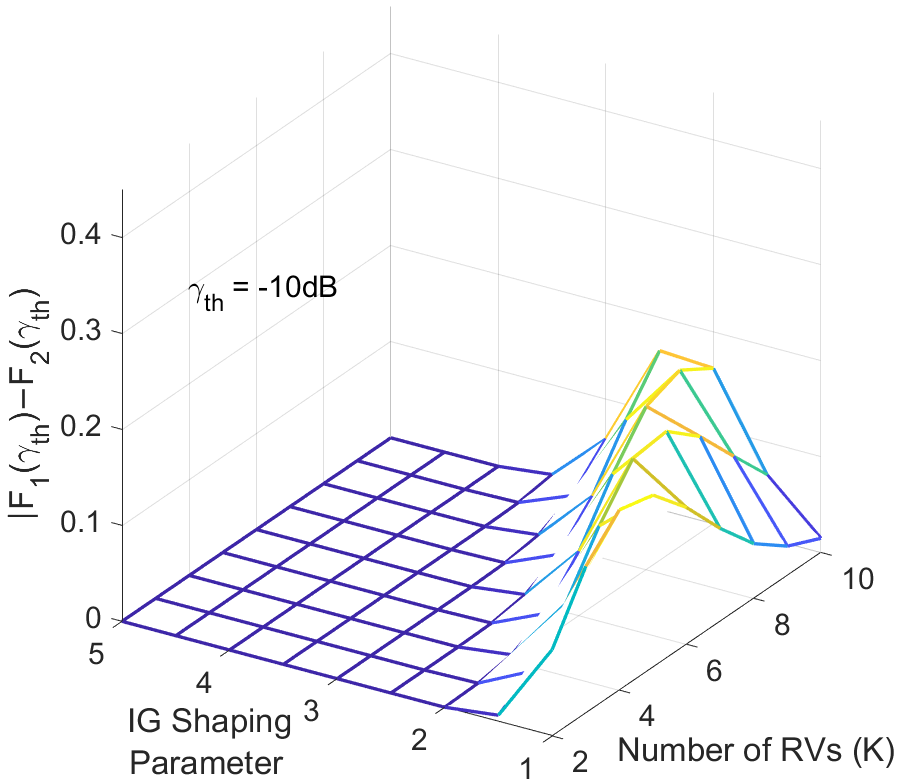}
        \caption{$\gamma_{\rm th}=-10~\mathrm{dB}$.}
        \label{fig:fig1new}
    \end{subfigure}
    \hfill
    \begin{subfigure}[t]{0.48\columnwidth}
        \centering
        \includegraphics[width=\linewidth]{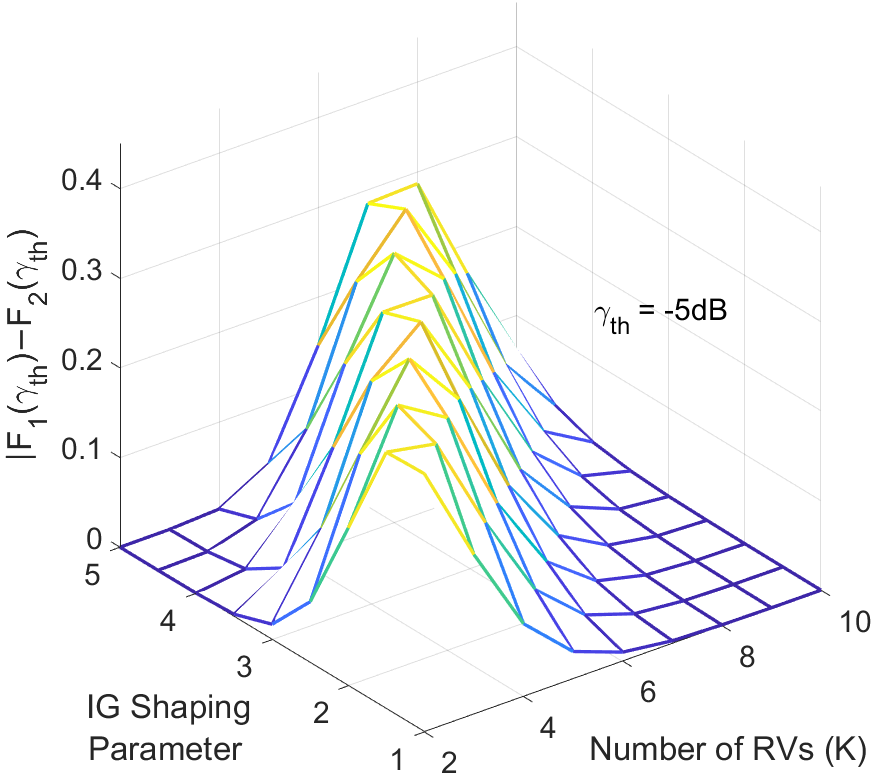}
        \caption{$\gamma_{\rm th}=-5~\mathrm{dB}$.}
        \label{fig:fig2new}
    \end{subfigure}
    \caption{{Absolute CDF difference $|F_{\bf I_1}(\gamma_{\rm th})-F_{\bf I_2}(\gamma_{\rm th})|$ versus the IG shaping parameter and the number of RVs $K$.}}
    \label{fig:twofigs_quantization}
\end{figure}
{Furthermore, we benchmarked the time required to evaluate the PDF of the sum of two IG RVs over a discrete $(x,\alpha_k)$ grid, comparing the proposed closed-form finite-sum expression against a Talbot-based numerical inverse Laplace transform (NILT). Using Wolfram Mathematica, execution times were recorded via repeated evaluations and summarized through the \textit{RepeatedTiming} function. In Table~\ref{tab:timing_ratio_alpha1_alpha3}, assuming $W_k=1$, it is shown that the normalized timing ratio $(t_{\mathrm{NILT}}/t_{\mathrm{closed}})$ was consistently larger than unity, demonstrating that NILT is substantially more time-consuming across all tested configurations, with the gap worsening as the number of summed variables increases.}

{Further investigations have shown that the number of summed RVs $(K)$ affects the normalized timing ratio $(t_{\rm NILT}/t_{\rm closed})$, and hence the relative computational advantage of the proposed closed-form formulation. Overall, the ratio remains predominantly larger than unity, indicating that the closed-form finite-sum expression is typically faster than NILT, although the gap tends to narrow as $(K)$ increases due to higher evaluation complexity. Moreover, while NILT may exhibit nearly constant runtime per point, its robustness can be sensitive to contour settings, numerical precision, and tail regimes in complex scenarios. Beyond timing, the closed-form approach remains attractive because it provides reproducible evaluation and enables further analytical developments (e.g., SIR-based metrics), which are difficult to obtain by combining numerical inversion with additional numerical integrations.
}

\begin{table}[t]
\centering
\caption{{Normalized timing ratio $t_{\rm NILT}/t_{\rm closed}$ for $n_k\in\{1,3\}$ and selected $x$ values.}}
\label{tab:timing_ratio_alpha1_alpha3}
\renewcommand{\arraystretch}{1.1}
\setlength{\tabcolsep}{5pt}
\begin{tabular}{c c c c c c c c}
\hline
$n_k$ & $x=1$ & $x=5$ & $x=9$ & $x=13$ & $x=17$ & $x=21$ & $x=25$ \\
\hline
$1$ & 9.63 & 20.07 & 23.89 & 24.31 & 23.70 & 20.88 & 26.65 \\
$3$ & 15.31 & 12.21 & 12.71 & 12.03 & 9.13 & 10.73 & 12.72 \\
\hline
\end{tabular}
\end{table}

\section{Application Scenarios}
 {Capitalizing on the analytical results presented in Section II, two application scenarios are examined that have recently gained significant interest in the research community}, i.e., a UAV-assisted network and an HSR communication system.  {The performance of both schemes under consideration is evaluated using the criteria of outage probability (OP) and coverage probability (CP).  {The OP is defined as the probability that the instantaneous received SIR $\gamma_{\rm out}$ (or signal-to-noise ratio (SNR) in noise-limited scenarios)  falls below a target threshold $\gamma_{\rm th}$, i.e.,
\begin{equation}
P_{\rm out} \triangleq \Pr\left\{\gamma_{\rm out}\le \gamma_{\rm th}\right\}
=F_{\gamma_{\rm out}}\!\left(\gamma_{\rm th}\right).
\end{equation}
The CP is complementary to the OP, i.e., $P_{\rm cov}=1-P_{\rm out}$.}}

\subsection{UAV-Assisted Wireless Network}
We examine a multi‑tier UAV-assisted communication network, in which UAVs function as airborne base stations (BSs) providing services to ground‑based user equipment (UE) \cite{wang2023outage}. 
A given UE is associated with its \emph{serving} aerial BS, labeled~$m$, from which it receives the desired signal \cite{hadj2020sinr}. 
At the same time, the UE is subjected to co--channel interference originating from $K$ neighboring UAV-BSs, indexed by $k \in \{1,2,\ldots,K\}$ with $k\notin m$. Assuming that the fast fading effects are averaged out by the receiver \cite{borralho2023coverage}, the received SIR can be written as
\begin{equation}\label{eq:Application1_def}
  \gamma_{\rm  out}
  \;=\;
  \frac{P_m G_m d_m^{-n} I_m}
       {\displaystyle\sum_{k=1}^{K} P_k G_k d_k^{-n} I_k}=\frac{I_D}{I_I},
\end{equation}
where $P_i$, with $i \in \{k,m\} $, denotes the transmit power of UAV-BS~$i$, $G_i$ is the antenna gain, $d_i$  {denotes the distance between the $i$-th UAV and the UE, $I_{i}$ represents the shadowing factor of the $i$-th downlink} (with PDF given by \eqref{eq:PDF_IG}), and $n$ denotes the path loss exponent. Based on the definition provided in \eqref{eq:Application1_def} and making a change of variables, the CDF of {desired received signal} $I_D$ is given by $F_{I_D}(\gamma)=F_{I_k}\left(\gamma \right)$, with $W_D=\frac{P_m G_m}{ d_m^{n} }$, while the PDF of {total interfering signals} $I_I$ is given by $f_{I_I}(\gamma)=f_{\mathbf{I}} (\gamma)$, with $W_k=\frac{ P_k G_k}{d_k^{n}}$.  {It is noted that $W_{D}$ denotes the scaling parameter of the IG distribution related to the channel of the serving aerial BS}. Using the laws of probability theory, the CDF of $\gamma_{\rm out}$ can be evaluated as
\begin{equation}\label{eq:USE_general}
    \begin{split}
   &     F_{\gamma_{\rm out}} (\gamma)=\int_0^\infty F_{I_D} \left(\gamma x\right) f_{{I_I}}(x)dx= \\ &
        \frac{\left( \prod_{k=1}^K \frac{\sqrt{\pi}}{\Gamma(\alpha_k)}\right)}{{\Gamma(\alpha_D) }} \sum_{k_1=0}^{n_1}\cdots\sum_{k_K=0}^{n_K}\frac1{\sqrt{\pi}} \frac{\mathcal{A}}{2^{\sum_{k=1}^K \frac{n_k-k_k}{2}+\frac12}} \\ & \times  \int_0^\infty \frac{\Gamma \left( \alpha_D,\frac{W_D}{ \gamma x}\right)}{x^{\sum_{k=1}^K \frac{n_k-k_k}{2}+1}}   \exp\left( -\frac{\left( \sum_{k=1}^K {\sqrt{W_k}}\right)^2}{2x}\right) \\ & \times {\rm D}_{1+\sum_{k=1}^K n_k-k_k} \left( \frac{\sum_{k=1}^K \sqrt{2W_k}}{\sqrt{x}}\right) dx,
    \end{split}
\end{equation}
{where $\alpha_D$ represents the shaping parameter of the shadowing distribution for the desired link.}  {It is worth noting that, although \eqref{eq:USE_general} involves an improper integral over $(0,\infty)$, the corresponding integrand is smooth and decays rapidly, which ensures convergence and facilitates accurate numerical evaluation through conventional numerical integration techniques.} 

\subsubsection{Asymptotic Results}  {In what follows, a simpler closed-form expression for the CP will be obtained, which can also be used to analyze the OP. More specifically,} assuming lower values of the transmit power for the serving aerial BS, i.e., $P_m<20~\mathrm{dBm}$, the upper incomplete gamma function included in \eqref{eq:USE_general} simplifies as follows  {$\Gamma(a,z)\simeq \Gamma(a)-\frac{z^a}{a}$} \cite[eq. (06.06.06.0004.01)]{E:wolfram}. Using this approximation, making a change of variables of the form $y=\frac{A}{\sqrt{x}}$, with $A\in \mathbb{R}$, results in the following type of integral
\begin{equation}\label{eq:integral}
\begin{split}
    \mathcal{I}&=\int_0^\infty t^{a-1}\exp\left( -bt^2\right) {\rm D}_v(t)dt \,{=}\, \frac{2^{\frac{v-a}2} \sqrt{\pi} \Gamma(a)}{\Gamma\left( \frac{a-v+1}{2}\right)} \\ & \times {\rm \null_2F_1}\left(\frac{a}{2},\frac{a+1}{2};\frac{a-v+1}{2};\frac12-2b \right).
\end{split}
\end{equation}
In \eqref{eq:integral}, {the second equality} is due to \cite[eq. (07.41.21.0014.01)]{E:wolfram}, while $\null_2F_1(\cdot)$ denotes the Gauss hypergeometric function \cite[eq. (9.100)]{B:Ryzhik_book}. Moreover, based on the above solution, using {\cite[eq. (07.23.03.0001.01)]{E:wolfram}}, and after some mathematical simplifications yields the following closed-form expression  {for the CP}
    \begin{equation}\label{eq:high_SNR1}
    \begin{split}
   &     C_p\simeq
        \frac{\left( \prod_{k=1}^K \frac{\sqrt{\pi}}{\Gamma(\alpha_k)}\right)}{{\Gamma(\alpha_D)^2 }} \sum_{k_1=0}^{n_1}\cdots\sum_{k_K=0}^{n_K}  \frac{\mathcal{A}}{ \alpha_D}\\ & \times \frac{2\Gamma\left[2 \left(\alpha_D +\sum\limits_{k=1}^K \frac{n_k-k_k}{2}\right)\right]}{ (\gamma/W_D)^{\alpha_D}\left(2\sum\limits_{k=1}^K {\sqrt{W_k}}\right)^{2\left(\alpha_D +\sum\limits_{k=1}^K \frac{n_k-k_k}{2}\right)}}  .
    \end{split}
\end{equation}
\begin{figure}
  \centering
    \includegraphics[width=2.8in]{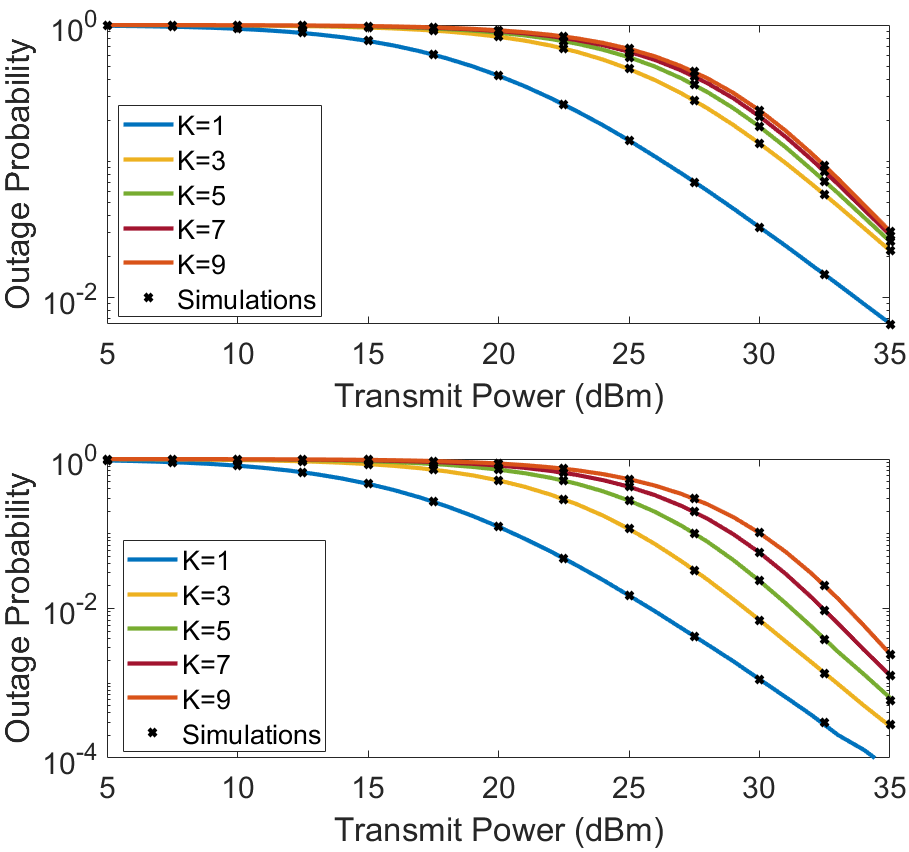}
    \caption{OP vs the transmit power for different numbers of interfering UAVs: In i.n.d. shadowing, OP increases.}
    \label{fig:fig1}
\end{figure}
As it is shown in \eqref{eq:high_SNR1}, the channel reliability slope depends only on the shadowing severity coefficient. 

\subsubsection{Numerical Results}
{Analytical results are presented that demonstrate how the system performance varies with key system and channel parameters.  {The simulation methodology is based on direct Monte Carlo (MC) generation of gamma RVs with the desired shaping and scaling parameters, followed by inversion to obtain IG samples, summation over the required number of links, computation of the instantaneous SNR/SIR, and empirical estimation of the corresponding OP/CP through sample averaging over a large number of realizations (e.g., $10^6$)}. Moreover, all numerical integrations were evaluated in \textit{Mathematica} using \texttt{NIntegrate} with its default error-control settings.
}  {Unless otherwise stated, the following assumptions are adopted: \(n=2\), \(P_k=20~\mathrm{dBm}\), and \(G_m=G_k=8~\mathrm{dBi}\).} In Fig.~\ref{fig:fig1}, the OP is plotted as a function of the transmit power $P_m$ for different numbers of interfering UAVs. For evaluating these results, light shadowing conditions are assumed, i.e., { {$\alpha_D=1.5$, while  $d_m=80~\mathrm{m}$}. Moreover, in Fig.~\ref{fig:fig1}.a, i.n.d. shadowing coefficients are assumed, with $d_k \in D$, where {$D=\{90, 95, 100, 105, 110, 115, 120, 125, 130\}~\mathrm{m}$}  {and $a_k \in A$, where {$A=\{1.5, 1.5, 1.5, 2.5, 2.5, 2.5, 3.5, 3.5, 3.5\}$}}. It is shown that the performance degrades as the number of interferers increases, as expected, albeit with a reduced rate of decline. In Fig.~\ref{fig:fig1}.b, for comparison purposes, another scenario is investigated, in which i.i.d. shadowing coefficients are assumed with $d_k=\mu_D = \frac{1}{|D|}\sum\limits_{d_k \in D}d_k$  {and $a_k=\mu_A = \frac{1}{|A|}\sum\limits_{a_k \in A}a_k$}. In that figure, it is also shown that the performance decreases as $K$ increases. However, it is also shown that the curves are more widely separated and extend to lower outage levels. In Fig.~\ref{fig:fig4},  {assuming $a_1=1.5,a_2=2.5$ and $d_k=d_m=107.5~\mathrm{m}$} the CP is plotted as a function of $P_m$ for different values of {$\alpha_D$}. It is clearly shown  {that for the same transmit power, different values of $\alpha_D$ result in significantly different CP values\footnote{ {It is worth noting that, under the IG formulation adopted here, the mean of the shadowing RV depends on $\alpha_k$, i.e., $\mathbb{E}[I_k]=W_k/(\alpha_k-1)$ for $\alpha_k>1$. Hence, for fixed $W_k$, increasing $\alpha_k$ reduces the average received power.}} Therefore, in Fig.~\ref{fig:fig4}, increasing $\alpha_D$ reduces the mean value of the desired-link shadowing RV and, consequently, the average useful received power, which explains the observed performance degradation. Moreover, the very small relative errors reported in the table included in Fig.~\ref{fig:fig4} confirm that (11) provides an accurate and practically useful approximation of the exact system behavior across the considered operating points, especially for lower transmit power values, i.e., below $10~\mathrm{dBm}$.}
Finally, an excellent agreement is observed between the analytical and simulated results, especially for higher transmit power values, whereas small deviations are observed only in the very low-outage regime.

\begin{figure}
  \centering
    \includegraphics[width=2.6in]{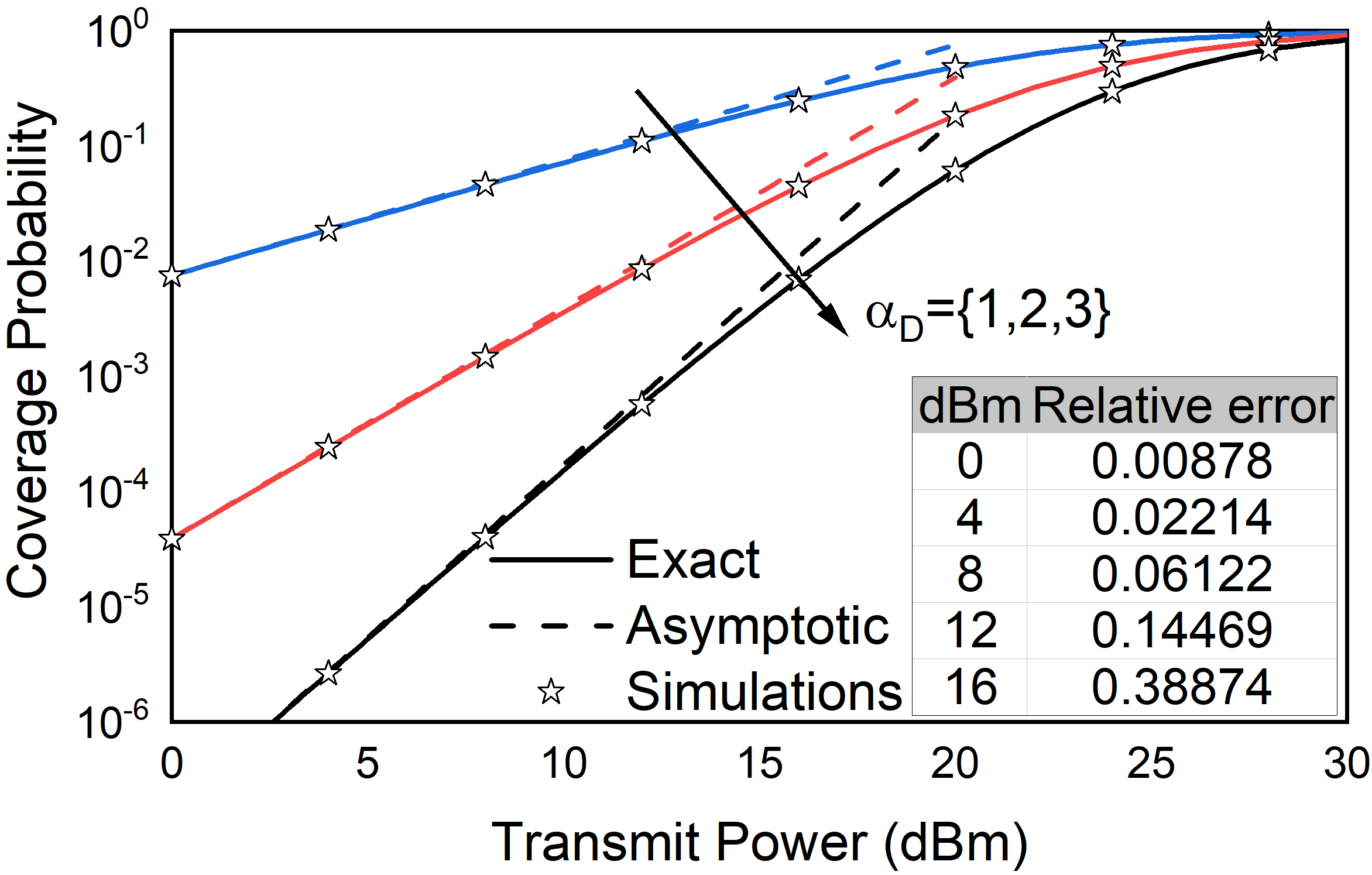}
    \caption{CP vs the transmit power: Shadowing severity has an important impact to the performance.}
    \label{fig:fig4}
\end{figure}

\subsection{Coverage Analysis for HSR Communication Networks}
Here, a coverage analytical framework is presented for an HSR communication scenario that leverages CA. BSs are uniformly spaced beside the track, each mounting two outward‑facing directional antennas that create narrow strip‑shaped cells. A train access point (TAP) installed above the lead car aggregates onboard traffic, turning every user–BS transmission into a two‑hop path: passenger to TAP (front‑end link) and TAP to BS (backhaul link) \cite{lin2021coverage}. Because service quality is dominated by backhaul, analysis is concentrated there \cite{lin2020coverage}. Traffic intensity varies geographically; consequently, sections with high demand, e.g., near stations—exploit CA with two non‑contiguous component carriers $f_1$ and $f_2$, whereas ordinary segments operate on single carrier $f_1$. Different Hata‑based path‑loss models characterize each carrier’s propagation under various terrain conditions.

The received instantaneous power at the TAP from carrier $k$ (with $k \in \{1,2\}$) is given by
\begin{equation}
    P_{r_k}=\underbrace{\frac{K_sP_T}{10^{A_k/10}r^{B_k/10}}}_{S_k} I_k,
\end{equation}
where 
\begin{displaymath}
\begin{split}
A_k=&\Delta_1+74.52+26.16 \log_{10}(f_k)-13.82\log_{10}(h_b)\\ & -3.2\left(\log_{10}(11.75h_m)\right)^2    
\end{split}
\end{displaymath}
and
\begin{displaymath}
B_k=\left[44.9-6.55\log_{10}(h_b)+\Delta_2\right].  
\end{displaymath}
In the above expressions,  $h_b, h_m$ denote the effective antenna height of the BS and the TAP, respectively, $r$ denotes the transmit distance between the BS and TAP antennas,  {and $\Delta_1,\Delta_2$ are the correction factors for the model \cite{lin2021coverage}}. Moreover, $P_T$ denotes the transmit power and $K_s$ is a
unitless constant that depends on the antenna characteristics.

Based on the above presented path loss model and assuming that CA is employed, the total received SNR of the TAP at position $r$ is given by
\begin{equation}\label{eq:SNR_coverage_def}
    \gamma_{\rm out}=\sum_{k=1}^2 \frac{\eta_k P_{r_k}}{N_k}=\sum_{k=1}^2 \frac{\eta_kS_kI_k}{N_k}=\sum_{k=1}^2 \eta_kW_kI_k,
\end{equation}
where $N_k$ denotes the noise variance at the TAP over carrier $k$, while $\eta_1=1, \eta_2\in\{0,1\}$. For 
$\eta_2=1$, two backhaul links are assumed, with CA yielding a total SNR equal to their sum, while for $\eta_2=0$, only the $f_1$ link is used, and the total SNR is that of the single link. Based on \eqref{eq:SNR_coverage_def}, the CDF of the output SNR of the communication system is given by $F_{\gamma_{\rm out}} (\gamma)=F_{\bf I}(\gamma)$, with $W_k=\frac{K_sP_T}{10^{A_k/10}r^{B_k/10}N_k}$. 

\subsubsection{Asymptotic Results}
Assuming higher values of the transmit power, a simplified expression can be obtained for the CDF of the SNR of the scheme under consideration.  {More specifically, starting from \eqref{eq:CDF_sum}, assuming higher values of $W_k$} and using \cite[eq. (07.41.06.0013.01)]{E:wolfram}, the parabolic cylinder function can asymptotically be approximated by $D_v(x) \simeq  x^v \exp\left( -\frac{x^2}{4}\right)$.  {The adopted approximation corresponds to the leading-order truncation of the asymptotic expansion of the parabolic cylinder function, which is sufficient to capture the dominant behavior in the large-argument regime while preserving analytical tractability.} Based on this approximation and after some mathematical simplifications, the CDF of $\gamma_{\rm out}$  {simplifies to the following closed-form expression}
\begin{equation}\label{eq:CDF_sum_asy}
\begin{split}
   & F_{\gamma_{\rm out}} (\gamma)\simeq \left( \prod_{k=1}^K \frac{\sqrt{\pi}}{\Gamma(\alpha_k)}\right) \sum_{k_1=0}^{n_1}\cdots\sum_{k_K=0}^{n_K} \frac{\mathcal{A}}{\sqrt{\pi}} \\ & \times    \frac{\left( \sum_{k=1}^K {\sqrt{W_k}}\right)^{\sum_{k=1}^K n_k-k_k-1}}{\gamma^{\sum_{k=1}^K n_k-k_k-1/2}}  \exp\left( -\frac{\left( \sum_{k=1}^K {\sqrt{W_k}}\right)^2}{\gamma}\right)  .
\end{split}
\end{equation}

\subsubsection{Numerical Results}
{The performance of the scheme under consideration is evaluated in terms of the CDF of ($\gamma_{\mathrm{out}}$) and the edge CP, while MC simulations with more than ($10^6$) samples are also conducted for comparison purposes.}
CP is evaluated at the cell edge ($r=D$). In Fig.~\ref{fig:fig2},  {the OP, i.e., $F_{\gamma_{\mathrm{out}}}(\gamma_{\mathrm{th}})=\Pr\{\gamma_{\mathrm{out}}\leq \gamma_{\mathrm{th}}\}$, is plotted versus the transmit power $P_T$ for different HSR propagation scenarios, with the main simulation parameters set as in \cite{lin2021coverage}: $((f_1,f_2)=(350~\mathrm{MHz},400~\mathrm{MHz})), (h_b=40~\mathrm{m}), (h_m=4~\mathrm{m}), \textrm{ and } (N_k=10^{-5})$.
 Moreover, it is also assumed that CA is employed in all scenarios investigated, while {$r=1~\mathrm{km}$, $\gamma_{\rm th}=0~\mathrm{dB}$},  {and $a_1=1.5,$ $a_2=2.5$}. It is shown that OP decreases monotonically as the transmit power increases. Among the HSR propagation scenarios, the viaduct yields the smallest OP, whereas the cutting exhibits the highest. Overall, HSR performance is strongly scenario-dependent and larger cells amplify these differences. Moreover, in the same figure, the asymptotic curves derived from \eqref{eq:CDF_sum_asy} are also plotted, demonstrating an excellent match with the exact results, particularly at higher transmit power levels.

Moreover, in Fig.~\ref{fig:fig3}, the edge CP is plotted as a function of radius $D$ for different power levels. Moreover, the performance of the CA-based systems is compared with those without CA. As clearly shown, across all scenarios, the system with CA consistently outperforms its counterpart without CA. Moreover, at higher values of the transmit power, for small cell radii, the cell edge can be almost fully covered, resulting in comparable performance between systems with and without CA. As the radius increases to moderate values, the edge CP gap between the two systems widens with larger $ 
D$. For instance, at 
{$P_T=40~\mathrm{dBm}$} and edge CP equal to $3\cdot10^{-1}$, CA extends the cell radius by almost {$0.5~\mathrm{km}$}. At large cell radii, however, the edge CP for both systems approaches zero, indicating that the cell edge is hardly covered. Finally, it is noted that in all figures, the simulation results are in excellent agreement with the numerically evaluated ones.

\begin{figure}
  \centering
    \includegraphics[width=2.6in]{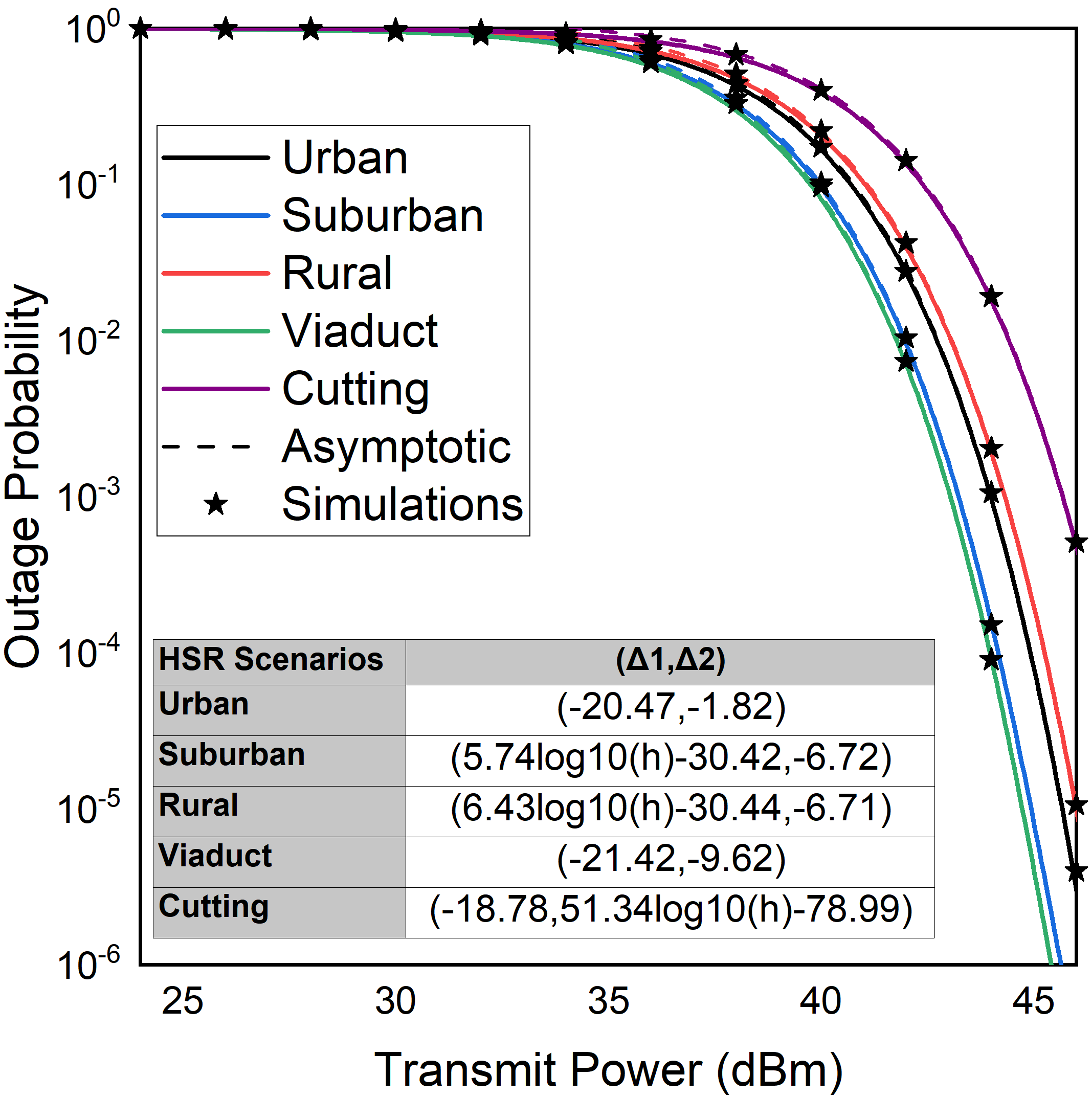}
    \caption{ {OP vs  $P_T$ for different HSR propagation scenarios, with $r=1$ km and $\gamma_{\mathrm{th}}=0$ dB (in this table $h \equiv h_b $).}}
    \label{fig:fig2}
\end{figure}

\begin{figure}
  \centering
    \includegraphics[width=2.5in]{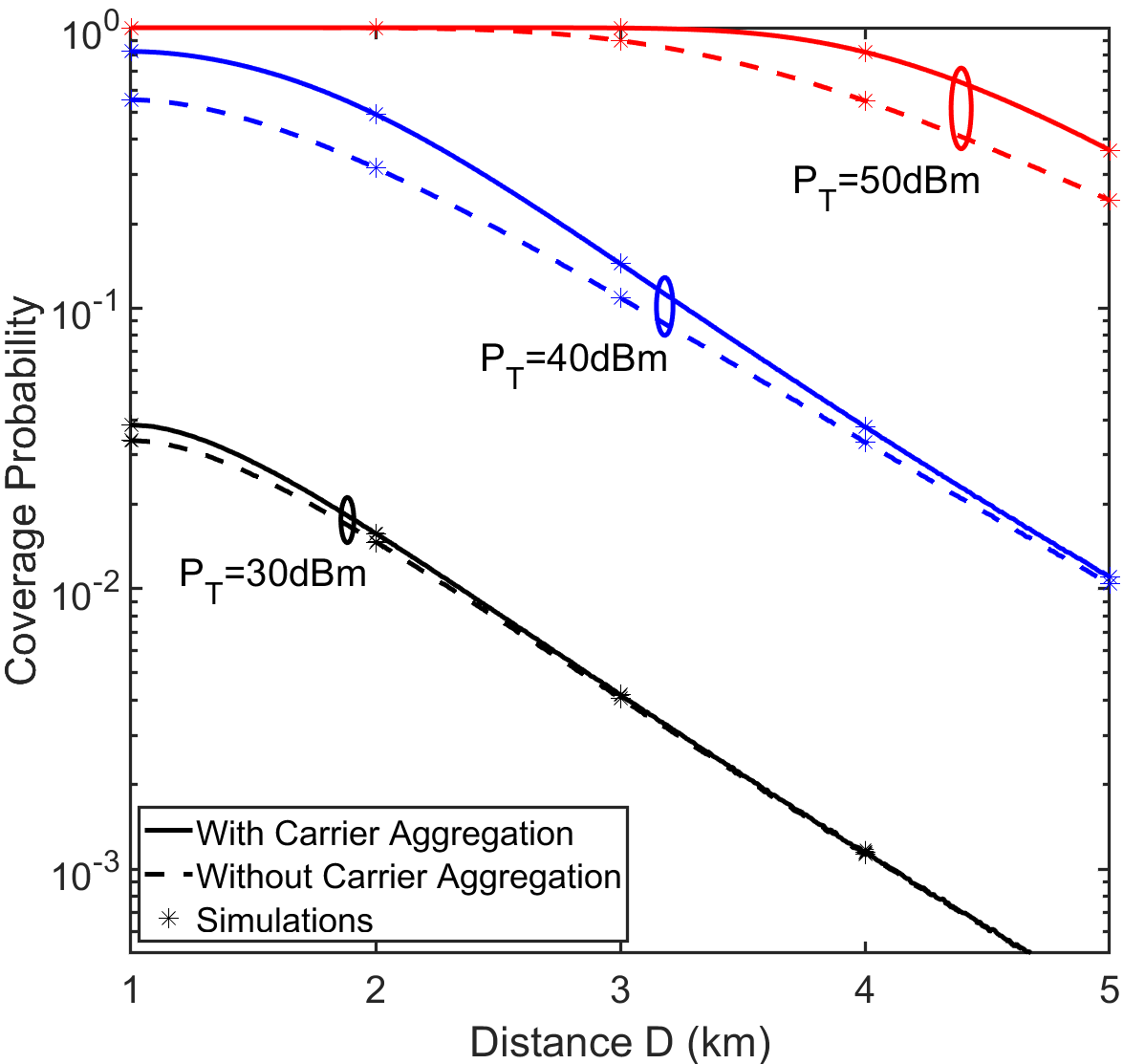}
    \caption{CP vs distance for different transmit powers: CA improves performance in moderate values of the distance.}
    \label{fig:fig3}
\end{figure}

\section{Conclusion}
{ {This work presents an exact and numerically robust analytical framework for the sum of i.n.d. IG RVs, yielding closed-form PDF/CDF expressions that avoid multi-fold integration and do not rely on slowly convergent series. The derived formulas are computationally efficient and enable reliable outage/coverage evaluation over broad parameter ranges, while the proposed asymptotic approximations provide additional insights. The practical value of the analysis is demonstrated in two representative scenarios. In the UAV-assisted network case, the framework produces tractable outage expressions and reveals the impact of the number of interfering UAVs, as well as the performance difference between i.n.d. and i.i.d. shadowing assumptions. In the HSR CA case, the results quantify the dependence of coverage on propagation conditions and confirm the coverage gains offered by aggregation, particularly for moderate cell radii. As a future research direction, it would be interesting to investigate the corresponding statistics of the sum of correlated IG RVs. Such analysis could be applied to UAV-assisted communications, where shadowing is expected to exhibit spatial correlation both across distinct UAV and along successive waypoints of a single UAV trajectory \cite{bithas2024uav}.
}

\appendix 
\section{Proof of Theorem}
In this appendix, the proofs of Theorem~1 and Lemma~1 are given. Starting from the Laplace transform of $I_k$ in \eqref{eq:MGF_Ik}, assuming $\alpha_k=n_k+1/2$, and using \cite[eq. (8.468)]{B:Ryzhik_book}, yields
\begin{equation}\label{eq:MGF_Ik_2nd}
\begin{split}
    M_{I_k} (s)&= \frac{\sqrt{\pi}}{\Gamma(\alpha_k)} \exp\left( -{2\sqrt{sW_k}}\right) \\ & \times \sum_{k_k=0}^{n_k} \frac{(n_k+k_k)! (sW_k)^{n_k-k_k}}{k_k! (n_k-k_k)! 2^{2k_k} }.
    \end{split}
\end{equation}
Since the $I_k$'s are independent RVs, the Laplace transform of $\mathbf{I}$ is given by \cite{moschopoulos1985distribution}
\begin{equation}\label{eq:MGF_def}
    M_{\bf{I}}(s)=\prod_{k=1}^K M_{I_k} (s).
\end{equation}
Substituting \eqref{eq:MGF_Ik_2nd} in \eqref{eq:MGF_def} and after some  {mathematical} simplifications yields
\begin{equation}\label{eq:MGF}
\begin{split}
    M_{\bf{I}}(s)&=\left( \prod_{k=1}^K \frac{\sqrt{\pi}}{\Gamma(\alpha_k)}\right) \sum_{k_1=0}^{n_1}\cdots\sum_{k_K=0}^{n_K} \mathcal{A}\\ & \times s^{\sum_{k=1}^K \frac{n_k-k_k}{2}}   \exp\left( -2\sum_{k=1}^K {\sqrt{sW_k}}\right).
\end{split}
\end{equation}
Each term in \eqref{eq:MGF} has the form $M_{\mathbf{I}}(s)=A s^B\exp(-C\sqrt{s})$, where $ B=\frac{1}{2}\sum_{k=1}^{K}(n_k-k_k), \qquad C=2\sum_{k=1}^{K}\sqrt{W_k},$ and coefficient $A$ is related to the multiple sum. The PDF and CDF follow from the inverse Laplace transforms
\begin{equation}\label{eq:invLaplace_pdf_cdf}
f_{\bf I}(x)=\mathcal{L}^{-1}\!\left\{ M_{\bf I}(s)\right\}, \qquad
F_{\bf I}(x)=\mathcal{L}^{-1}\!\left\{ \frac{M_{\bf I}(s)}{s}\right\}.
\end{equation}
Here, $\mathcal{L}^{-1}\{\cdot\}$ denotes the inverse Laplace
transform. Using \cite[eq. (2.2.1/10)]{B:prudnikov_book5}, the following exact generic expressions for $f_{\bf{I}}(x)$ and $F_{\bf{I}}(x)$ are obtained
 {\begin{subequations}\label{eq:generic_pdf_cdf}
\begin{align}
f_{\bf I}(x) &=
\frac{A/\sqrt{\pi}}{2^{B+\frac{1}{2}}\,x^{B+1}}
\exp\!\left(-\frac{C^2}{8x}\right)
{\rm D}_{2B+1}\!\left(\frac{C}{\sqrt{2x}}\right),
\label{eq:generic_pdf}\\[2mm]
F_{\bf I}(x) &=
\frac{A}{2^{B-\frac{1}{2}}\,x^{B}\sqrt{\pi}}
\exp\!\left(-\frac{C^2}{8x}\right)
{\rm D}_{2B-1}\!\left(\frac{C}{\sqrt{2x}}\right).
\label{eq:generic_cdf}
\end{align}
\end{subequations}}
Finally, substituting the corresponding values of $A$, $B$,
and $C$ into \eqref{eq:generic_pdf} and
\eqref{eq:generic_cdf} yields \eqref{eq:PDF_sum} and
\eqref{eq:CDF_sum}, thereby completing the proof.


%



\ifCLASSOPTIONcaptionsoff
  \newpage
\fi



%
\balance
\bibliographystyle{IEEEtran}
\bibliography{IEEEabrv,references}

%




\end{document}